\newcommand{\cas}{\mathrm{cas}}
\newcommand{\cht}{\mathsf{H}}
\newcommand{\qht}{\mathsf{QHT}}
\newcommand{\cft}{\mathsf{F}}
\newcommand{\qft}{\mathsf{QFT}}
\newcommand{\qct}{\mathsf{QCT}}
\newcommand{\qst}{\mathsf{QST}}
\newcommand{\cnot}{\textsc{cnot}}
\newcommand{\orop}{\textsc{or}}
\newcommand{\inc}{\textsc{inc}}
\newcommand{\dec}{\textsc{dec}}
\algrenewcommand{\Require}{\item[\textbf{Input:}]}
\algrenewcommand{\Ensure}{\item[\textbf{Output:}]}
\algrenewcommand{\alglinenumber}[1]{\footnotesize #1.}
\newcommand{\iddots}{\reflectbox{$\ddots$}}
\title{QRTlib: A Library for Fast Quantum Real Transforms}
\author{
    Armin Ahmadkhaniha \\ {\tt ahmadkha@mcmaster.ca} \and
    Lu Chen \\ {\tt chenl143@mcmaster.ca} \and
    Jake Doliskani \\ {\tt jake.doliskani@mcmaster.ca} \and
    Zhifu Sun \\ {\tt sun143@mcmaster.ca} \\[4mm]
    Department of Computing and Software, McMaster University
}
\date{}
\begin{document}

\maketitle

\paragraph{Abstract}
Real-valued transforms such as the discrete cosine, sine, and Hartley transforms play a central role in classical computing, complementing the Fourier transform in applications from signal and image processing to data compression. However, their quantum counterparts have not evolved in parallel, and no unified framework exists for implementing them efficiently on quantum hardware. This article addresses this gap by introducing QRTlib, a library for fast and practical implementations of quantum real transforms, including the quantum Hartley, cosine, and sine transforms of various types. We develop new algorithms and circuit optimizations that make these transforms efficient and suitable for near-term devices. In particular, we present a quantum Hartley transform based on the linear combination of unitaries (LCU) technique, achieving a fourfold reduction in circuit size compared to prior methods, and an improved quantum sine transform of Type I that removes large multi-controlled operations. We also introduce circuit-level optimizations, including two’s-complement and or-tree constructions. QRTlib provides the first complete implementations of these quantum real transforms in Qiskit.

\paragraph{keywords}
Quantum Hartley Transform, Quantum Cosine Transform, Quantum Sine Transform, Quantum Real Transforms.

\section{Introduction}

Classical discrete transforms such as the discrete cosine transform (DCT), discrete sine transform (DST), and the Hartley transform are central tools in applied mathematics and engineering. The DCT, in particular, is the basis of widely used compression standards such as JPEG for images and MPEG for audio and video, where it enables efficient storage and transmission by concentrating energy into a small number of coefficients \cite{ahmed2006discrete, pennebaker1992jpeg, rao2014discrete, rao2018transform}. The DST has found applications in solving partial differential equations and in spectral methods where boundary conditions make sine expansions more natural than Fourier ones \cite{canuto2006spectral}. The Hartley transform, introduced as a real-valued alternative to the discrete Fourier transform (DFT), has been used extensively in signal and image processing tasks where the avoidance of complex numbers reduces overhead in implementation \cite{bracewell1983discrete, rao2018transform}. These transforms illustrate that, beyond the Fourier transform, real-valued transforms provide indispensable computational primitives in classical computing.

In the quantum setting, the discrete Fourier transform admits an efficient quantum analogue, the quantum Fourier transform ($\qft$) \cite{cleve1998quantum, cleve2000fast}. The $\qft$ has become a cornerstone of quantum computing, enabling powerful algorithms such as Shor’s factoring algorithm and playing a role in algorithms for hidden subgroup problems, phase estimation, and quantum simulation \cite{shor1994algorithms, kaye2006introduction}. By comparison, the study of quantum versions of real transforms such as the cosine, sine, and Hartley transforms is at a much earlier stage. Nevertheless, the classical analogy strongly suggests their potential importance: just as real transforms complement the Fourier transform in classical applications, their quantum counterparts may extend the range of quantum algorithms and provide structural or efficiency advantages. For example, quantum versions of real transforms may be useful in quantum signal and image processing, or in algorithms where restricting to real-valued bases reduces circuit complexity. In this sense, quantum real transforms can be viewed as natural companions to the $\qft$, and developing efficient algorithms and implementations for them is a necessary step toward broadening the algorithmic toolbox of quantum computing.

\paragraph{Previous work.}
Existing work on quantum real transforms has been limited. The work of Klappenecker and R\"{o}tteler introduced constructions of quantum cosine, sine, and Hartley transforms on quantum computers, establishing the theoretical feasibility of these transforms \cite{klappenecker2001discrete, klappenecker2001irresistible}. Subsequent works discussed the quantum sine and cosine transforms in the context of applications such as image and signal processing \cite{pang2006quantum, pang2019signal}. The quantum Hartley transform ($\qht$) was considered in Tseng and Hwang \cite{tseng2005quantum} and Agaian and Klappenecker \cite{agaian2002quantum}, with the latter improving the gate complexity of the earlier construction in \cite{klappenecker2001irresistible}. More recently, Doliskani, Mirzaei, and Mousavi \cite{doliskani2025public} proposed a recursive algorithm for the quantum Hartley transform that further improved upon the algorithm of Agaian and Klappenecker.

While these results provide valuable starting points, they leave open two key gaps: first, the absence of implementations of these algorithms in quantum programming frameworks, and second, the reliance on operations that are impractical on current devices, such as large multi-controlled gates, which significantly increase circuit depth and error rates. As a result, quantum real transforms have remained primarily theoretical constructs rather than practical quantum primitives.

\paragraph{Our contributions.}
In this work, we address these gaps and make three main contributions.

\begin{itemize}
    \item New $\qht$ algorithm. We propose a new quantum Hartley transform algorithm based on the linear combination of unitaries (LCU) technique. Our algorithm employs the quantum Fourier transform together with a simple circuit for a subroutine known as oblivious amplitude amplification. The LCU-based construction provides a more direct and efficient implementation, achieving an approximately fourfold reduction in circuit size compared to the best known algorithm~\cite{doliskani2025public}.

    \item An improved quantum sine transform of Type-I. We propose a new algorithm for $\qst^\mathrm{I}$, based on the quantum Fourier transform $\qft$, that improves upon the algorithm in \cite{klappenecker2001discrete}. In particular, our algorithm eliminates the need for the large multi-controlled operator required in \cite{klappenecker2001discrete}.
    
    \item Practical circuit optimizations. To make the algorithms more suitable for near-term quantum hardware, we introduce new optimizations that improve their practicality. These include an efficient two’s complement implementation that reduces gate complexity compared to standard approaches, and the use of or-tree structures in place of large multi-controlled operations, which lowers both the overall gate count and the error-prone circuit depth.

    \item First implementations. To the best of our knowledge, our library presents the first implementations of the quantum real transforms discussed in this paper, including the Hartley transform and the Type I, II, III, and IV cosine and sine transforms. Our implementations, done using Qiskit and tested in practice, represent the first time these transforms have been made available as concrete quantum circuits rather than purely theoretical designs.
\end{itemize}

Overall, this work provides the first systematic implementation and optimization of quantum real transforms. It shows that these transforms can be realized efficiently and that the resulting circuits are practical for near-term quantum hardware. Furthermore, it lays the groundwork for a comprehensive library of quantum real transforms, similar to the real transform libraries widely used in classical computing.

\paragraph{Code availability.}
The source code for QRTlib is located at \url{https://github.com/jake-doliskani/QRTlib}.

\section{Preliminaries}

\paragraph{Notation.}
In this paper, we always assume the dimension of the ambient Hilbert space is $N = 2^n$, for some integer $n \geq 1$, which represents the state space of an $n$-qubit system. The tensor product $\ket{\psi} \otimes \ket{\phi}$ of two quantum states $\ket{\psi}$ and $\ket{\phi}$ will often be denoted by $\ket{\psi}\ket{\phi}$. The state $\ket{\psi}^{\otimes m}$ (resp. operator $A^{\otimes m}$) denotes the $m$-fold tensor product of the state $\ket{\psi}$ (resp. operator $A$). We define $\omega_m = e^{2\pi i / m}$ as a primitive $m$-th root of unity. 

We denote the conditional one's complement and the conditional two's compelement unitaries by $P_{1C}$ and $P_{2C}$, respectively. More precisely,
\begin{align*}
    P_{1C} & = \ket{0}\bra{0} \otimes \mathds{1}_N + \ket{1}\bra{1} \otimes \ket{N - x - 1}\bra{x} \\
    P_{2C} & = \ket{0}\bra{0} \otimes \mathds{1}_N + \ket{1}\bra{1} \otimes \ket{(N - x) \bmod{N}}\bra{x}
\end{align*}
The conditional modular increment-by-one and decrement-by-one unitaries for an $n$-qubit input are denoted by $\inc_n$ and $\dec_n$, respectively, more precisely,
\begin{align*}
    \inc_n & = \ket{0}\bra{0} \otimes \mathds{1}_N + \ket{1}\bra{1} \otimes \ket{(x + 1) \bmod{N}}\bra{x} \\
    \dec_n & = \ket{0}\bra{0} \otimes \mathds{1}_N + \ket{1}\bra{1} \otimes \ket{(x - 1) \bmod{N}}\bra{x}
\end{align*}

\paragraph{The Fourier transform.}
Let $\Z_N$ be the group of integers mod $N$. The Fourier transform of a function $f: \Z_N \to \C$ is given by
\[ \cft_N(f)(a) = \frac{1}{\sqrt{N}} \sum_{y = 0}^{N - 1} \omega_N^{ay} f(y). \]
The quantum Fourier transform of a quantum state $\ket{\psi} = \sum_{x \in \Z_N} f(x) \ket{x}$ is given by $\qft_N\ket{\psi} = \sum_{y \in \Z_N} \cft_N(y) \ket{y}$. For a basis state $\ket{a}$, where $a \in \Z_N$, we have
\[ \qft_N \ket{a} = \frac{1}{\sqrt{N}} \sum_{y = 0}^{N - 1} \omega_N^{ay} \ket{y}. \]

\paragraph{The Hartley transform.}
The Hartley transform of a function $f: \Z_N \to \R$ is the function $\cht_N(f): \Z_N \to \R$ defined by
\[ \cht_N(f)(a) = \frac{1}{\sqrt{N}} \sum_{y = 0}^{N - 1} \cas\Big(\frac{2 \pi ay}{N}\Big) f(y),  \]
where $\cas(x) = \cos(x) + \sin(x)$. Like the Fourier transform, $\cht_N$ is a linear operator and it is unitary. It follows from the identity
\[ \cas\Big(\frac{2 \pi x}{N}\Big) = \frac{1 - i}{2} \omega_N^x + \frac{1 + i}{2} \omega_N^{-x} \]
that
\begin{equation}
    \label{eq:ht-ft}
    \cht_N = \frac{1 - i}{2}\cft_N + \frac{1 + i}{2}\cft_N^*.
\end{equation}
The quantum hartley transform of a basis state $\ket{a}$ is given by
\[ \qht_N\ket{a} = \frac{1}{\sqrt{N}} \sum_{y = 0}^{N - 1} \cas\Big(\frac{2 \pi ay}{N}\Big) \ket{y}. \]

\paragraph{Cosine and sine transforms.}
The different types of the discrete sine transforms, which we consider in this paper, are
\begin{align*}
    S_N^\text{I} & = \sqrt{\frac{2}{N}} \Big[ \sin\Big( \frac{m n \pi}{N} \Big)  \Big], \, 1 \leq m, n < N \\
    S_N^\text{II} & = \sqrt{\frac{2}{N}} \bigg[ k_m \sin\bigg( \frac{m (n + \frac{1}{2}) \pi}{N} \bigg)  \bigg], \, 1 \leq m, n \leq N \\    
    S_N^\text{III} & = \sqrt{\frac{2}{N}} \bigg[ k_n \sin\bigg( \frac{(m + \frac{1}{2}) n \pi}{N} \bigg)  \bigg], \, 1 \leq m, n \leq N \\
    S_N^\text{IV} & = \sqrt{\frac{2}{N}} \bigg[ \sin\bigg( \frac{(m + \frac{1}{2}) (n + \frac{1}{2}) \pi}{N} \bigg)  \bigg], \, 0 \leq m, n < N,
\end{align*}
where $k_j = 1 / \sqrt{2}$ for $j = N$ and $k_j = 1$ for $j \ne N$. Different types of the discrete cosine transform are
\begin{align*}
    C_N^\text{I} & = \sqrt{\frac{2}{N}} \Big[ k_m k_n \cos\Big( \frac{m n \pi}{N} \Big)  \Big], \, 0 \leq m, n \leq N \\
    C_N^\text{II} & = \sqrt{\frac{2}{N}} \bigg[ k_m \cos\bigg( \frac{m (n + \frac{1}{2}) \pi}{N} \bigg)  \bigg], \, 0 \leq m, n < N \\    
    C_N^\text{III} & = \sqrt{\frac{2}{N}} \bigg[ k_n \cos\bigg( \frac{(m + \frac{1}{2}) n \pi}{N} \bigg)  \bigg], \, 0 \leq m, n < N \\    
    C_N^\text{IV} & = \sqrt{\frac{2}{N}} \bigg[ \cos\bigg( \frac{(m + \frac{1}{2}) (n + \frac{1}{2}) \pi}{N} \bigg)  \bigg], \, 0 \leq m, n < N.
\end{align*}
The quantum versions of these transforms are denoted by $\qst_N^x$, for the quantum sine transform, and $\qct_N^x$ for quantum cosine transform of type $x \in \{\text{I, II, III, IV} \}$. For example, for a given basis element $\ket{a}$,
\[ \qst_N^\text{I} \ket{a} = \sqrt{\frac{2}{N}} \sum_{y = 1}^{N - 1} \sin\left( \frac{\pi a y}{N} \right) \ket{y}, \]
and 
\[ \qct_N^\text{I} \ket{a} = \sqrt{\frac{2}{N}} \sum_{y = 0}^{N} k_a k_y \cos\left( \frac{\pi a y}{N} \right) \ket{y}. \]

\section{Fast Quantum Hartley Transform}

The best-known algorithm for the Quantum Hartley Transform $\qht_N$ is the recursive algorithm proposed by Doliskani, Mirzaei, and Mousavi \cite{doliskani2025public}. In this section, we propose a new algorithm based on the Linear Combination of Unitaries (LCU) technique. For completeness, we first outline the recursive algorithm of \cite{doliskani2025public} and highlight some optimizations for its implementation.

\subsection{The recursive approach}
The core idea of the recursive approach is to exploit a structure-preserving identity that separates the most significant qubit and expresses the $n$-qubit transform $\qht_N$ in terms of the $(n-1)$-qubit transform $\qht_{N/2}$ together with some elementary operations.

Rewriting the action of $\qht_N$ of an $n$-qubit basis element $\ket{y}$, we obtain
\begin{align}
	& \qht_N\ket{y} \nonumber \\
    & = \frac{1}{\sqrt{N}} \sum_{y = 0}^{N - 1} \cas\Big( \frac{2\pi a y}{N} \Big) \ket{y} \nonumber \\
    & = \frac{1}{\sqrt{N}} \sum_{y = 0}^{N / 2 - 1} \cas\Big( \frac{2\pi a y}{N} \Big) (\ket{y} + (-1)^a \ket{y + N/2}) \nonumber \\
    & = \sqrt{\frac{2}{N}} \sum_{y = 0}^{N / 2 - 1} \cas\Big( \frac{2\pi a y}{N} \Big) \frac{1}{\sqrt{2}} (\ket{0} + (-1)^a \ket{1}) \ket{y}, \label{eq:qht-alt} 
\end{align}
where in the last equality, we have decomposed the system into a register containing the first qubit and a register containing the remaining $n-1$ qubits. This decomposition motivates the recursive implementation outlined in Algorithm~\ref{alg:qht-rec}. The algorithm computes $\qht_N$ by augmenting the state with a single ancilla qubit, applying $\qht_{N/2}$ to the reduced register, and using elementary gates to construct the final output. The algorithm requires $2\log^2 N + O(\log N)$ elementary gates to implement \cite[Theorem 4.1]{doliskani2025public}.

We brefiely explain the transforms used in the algorithm. The conditional operator 
\[ \ket{0}\ket{y} \mapsto \ket{0}\ket{y}, \ket{1}\ket{y} \mapsto \ket{1}\ket{N/2 - y}, \]
mentioned in \cite{doliskani2025public}, is the conditional two's complement unitary $P_{2C}$, i.e., the value of the second register is $N / 2 - y \bmod{N / 2}$. The unitary $U_R$ is defined by the action $U_R \ket{c}\ket{y}\ket{b} = (R(y, b)\ket{c})\ket{y}\ket{b}$, where $R(y, b)$ is a single-qubit rotation defined by
\[
    R(y, b) = 
    \begin{bmatrix}
        \cos(2\pi b y / N) & \sin(2\pi b y / N) \\
        -\sin(2\pi b y / N) & \cos(2\pi b y / N)
    \end{bmatrix}.
\]
The unitary $C_X$ is a multi controlled-\cnot{} that takes $\ket{c} \ket{y} \ket{b}$ to $\ket{c} \ket{y} \ket{1 \oplus b}$ if $c = 1$ and $y = 0$, and acts as identity otherwise.

\begin{algorithm}
    \caption{Recursive $\qht_N$}
    \label{alg:qht-rec}
    \begin{algorithmic}[1]
        \Require $n$-qubit state $\ket{\psi}$
        \Ensure $\qht_N \ket{\psi}$.
        
        \State Initialize an ancilla qubit to $0$ to obtain the state $\ket{0}\ket{\psi}$
        \State Compute $\mathds{1}_2 \otimes \qht_{N/2} \otimes \mathds{1}_2$ recursively
        \State Apply $H \otimes \mathds{1}_N$
        \State\label{stp:twos-cmp}Perform the conditional two's complement on the first $n$ qubits, using the first qubit as control. \Comment{the unitary $P_{2C}$}
        \State Apply the unitary $U_R$
        \State Perform the conditional two's complement on the first $n$ qubits, using the first qubit as control. \Comment{the unitary $P_{2C}$}
        \State Apply the unitary $(H \otimes \mathds{1}_N)C_X(H \otimes \mathds{1}_N)$
        \State Apply $H \otimes \mathds{1}_N$
        \State Apply $\cnot{}$ to the first and last qubits
        \State Apply $\mathds{1}_N \otimes H$
        \State Relabel the qubits to implement the effect of the swap $\ket{0}\ket{y}\ket{b} \mapsto \ket{0}\ket{b}\ket{y}$
        \State Trace out the first qubit
    \end{algorithmic}
\end{algorithm}

\subsection{An optimized two's complement \\ algorithm}
One of the expensive steps of Algorithm \ref{alg:qht-rec}, despite its apparent simplicity, is the conditional two's complement operation in Step \ref{stp:twos-cmp}. There are two distinct approaches to implementing this operation: without an ancilla register, and with an ancilla register.

The implementation without an ancilla register involves using multi-controlled \cnot{} gates, which are costly operations. More precisely, for an $n$-qubit two’s complement operation, one must use \cnot{} gates with $n$ control qubits. These gates are error-prone and expensive, increasing the gate complexity of the two's complement operation to $O(\log^2 N)$ and the overall complexity Algorithm~\ref{alg:qht-rec} to $O(\log^3 N)$. Furthermore, some implementations of these big conditional \cnot{} gates use ancillas to break down the operation to elementary gates. 

By contrast, using an ancilla register, the two’s complement operation can be implemented with $O(\log N)$ elementary gates. We have implemented an optimized version of this two's complement operation that proceeds in the standard two steps: first, all qubits are negated, then result is incremented by $1$. For complemtness, we briefly explain the increment-by-one operation, which is adapted from the constant adder circuit proposed by Fedoriaka \cite{fedoriaka2025new}. The circuit introduces a series of ancillary qubits—called carry qubits—that temporarily store the intermediate carries during addition. For an \(n\)-qubit data register, we require exactly \(n-2\) such carry qubits.

The process begins by computing the first carry qubit, \(a_1\), based on the two least significant data qubits, \(b_0\) and \(b_1\). This initial carry triggers the ripple effect needed to propagate the addition logic across the register. Each subsequent carry qubit \(a_{i+1}\) is then computed using the previously stored carry \(a_i\) and the next data qubit \(b_{i+1}\). This forward propagation continues up to the most significant qubit.

Once the full set of carry values has been generated, the data qubits are flipped in reverse order—from most significant to least significant—based on the corresponding carry conditions. After the increment is complete, all carry qubits must be uncomputed to restore them to the \(\ket{0}\) state, preserving the reversibility of the overall transformation. This uncomputation step is done by traversing the same logic in reverse. Figure \ref{fig:inc_1} shows the circuit structure for a $5$-qubit data register with $3$ carry qubits. 
\begin{figure}[t]
    \includegraphics[width = \columnwidth]{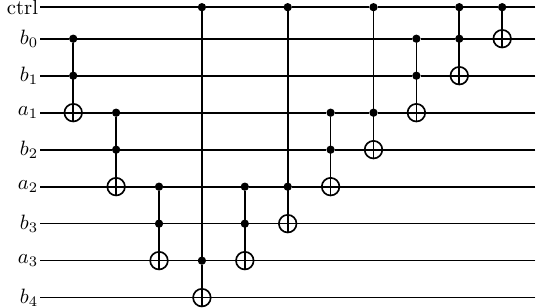}
    \caption{The circuit for $\inc_5$. The $b_i$ and $a_i$ represent the data and carry qubits, respectively.}
    \label{fig:inc_1}
\end{figure}
The upper portion of the circuit computes the forward pass of carry propagation, while the lower portion executes the data flips and the clean-up (uncomputation) phase.

The two's complement algorithm for \(n\)-qubit data, requires \(n-2\) ancillary qubits and \(4n-4\) elementary gates. To the best of our knowledge, this is the most efficient unitary two’s complement implementation available in the literature. Figure \ref{fig:snip-inc_1} is a snippet from our implementation, performing a carry propagation and recovery in the $\inc_n$ function.

\begin{figure}
\begin{minted}{python}
# Forward propagate carries: a_{i+1} = a_i AND b_{i+1}
for i in range(len(anc_qubits) - 1):
    circuit.ccx(anc_qubits[i], 
                target_qubits[i + 2], 
                anc_qubits[i + 1])

# Backward pass (ripple under ctrl):
for i in range(len(anc_qubits) - 1, 0, -1):
    circuit.ccx(
        ctrl, anc_qubits[i], 
        target_qubits[i + 2]
    )  # flip b_{i+2} if ctrl=1 and a_i=1
    circuit.ccx(
        anc_qubits[i - 1], target_qubits[i + 1], 
        anc_qubits[i]
    )  # uncompute a_i
\end{minted}
    \caption{Carry propagation for $\inc_n$}
    \label{fig:snip-inc_1}
\end{figure}

\subsection{A new $\qht$ algorithm using LCU}

In this section, we propose a new $\qht_N$ algorithm that utilizes a well-known technique called Linear Combination of Unitaries (LCU). We refer the reader to Appendix \ref{sec:LCU} for a brief overview of the LCU technique. Let $T: \Z_N \to \Z_N$ be the two's complement unitary, that acts on any function $f: \Z_N \to \R$ by $Tf(x) = f(N - x \bmod N)$. Then we have
\begin{align*}
    (F_NTf)(x)
    & = \frac{1}{\sqrt{N}} \sum_{y = 0}^{N - 1} \omega_N^{xy} f(N - y \bmod N) \\
    & = \frac{1}{\sqrt{N}} \sum_{y = 0}^{N - 1} \omega_N^{-xy} f(y) \\
    & = (F_N^*f)(x).
\end{align*}
Therefore, $F_NT = F_N^*$. Using this, we can rewrite the Hartley transform $H_N$ as
\begin{align*}
    H_N
    &= \Re(F_N) + \Im(F_N) \\
    &= \frac{1}{2}(F_N + F_N T) + \frac{1}{2i}(F_N - F_N T) \\
    &= F_N \Big( \frac{1}{2}(1 + \frac{1}{i}) \mathds{1}_N + \frac{1}{2}(1 - \frac{1}{i}) T \Big) \\
    &= F_N \Big( \frac{1}{\sqrt{2}} e^{-i\frac{\pi}{4}} \mathds{1}_N + \frac{1}{\sqrt{2}} e^{i\frac{\pi}{4}} T \Big)
\end{align*}
We now show how to implement $\qht_N$ via the LCU framework. Let $V = (e^{-i\frac{\pi}{4}} \mathds{1}_N + e^{i\frac{\pi}{4}} T) / \sqrt{2}$. Absorbing the complex coefficients into the unitaries as global phase gates, we define 
\[
    U_0 = e^{-i\frac{\pi}{4}} \mathds{1}_N, \quad U_1 = e^{i\frac{\pi}{4}} T.
\]
Then
\begin{equation}
    \label{eq:lcu-V}
    V = \frac{1}{\sqrt{2}} U_0 + \frac{1}{\sqrt{2}} U_1, \quad \qht_N = \qft_N V,
\end{equation}
and implementing $V$ would result in an efficient algorithm for $\qht_N$. From the definition of $V$, we obtain the LCU setting (Appendix \ref{sec:LCU}) as follows. We have $m = 1$, so $M = 2^m = 2$, and $a_1 = a_2 = 1/\sqrt{2}$. Therefore, the operator $A$ in \eqref{eq:luc-A} satisfies $A \ket{0} = (\ket{0} + \ket{1}) / \sqrt{2}$, which means $A = H$. The unitary $U$ is $U = \ket{0}\bra{0} \otimes U_0 + \ket{1}\bra{1} \otimes U_1$. The unitary $W$ is
\[
    W = (H \otimes \mathds{1}_N) U (H \otimes \mathds{1}_N).
\]
The reflection $R$ is $R = 2\ket{0}\bra{0} - \mathds{1}$ on the first qubit, which is just the Pauli $Z$ operator. Finally, the oblivious amplitude amplification operator is $\mathcal{S} = -W R W^* R$. The action of $W$ on the input state $\ket{0} \ket{\psi}$ is
\[
    W \ket{0} \ket{\psi} = \sin \theta \ket{0} V\ket{\psi} + \cos \theta \ket{\phi^\perp}, \quad \text{where } \theta = \frac{\pi}{4}.
\]
Since $\pi/(2\theta) = 2$ is not an odd integer, we cannot use Lemma \ref{lem:obl-amp} to obtain the perfect transform $\ket{0}\ket{\psi} \mapsto \ket{0} V\ket{\psi}$. Instead, we use the angle $\theta' = \pi/6$ to obtain the operator $P$ defined by
\begin{align*}
    P: \ket{0}
    & \mapsto \Big( \frac{\sin\theta'}{\sin\theta} \Big) \ket{0} + \sqrt{1 - \Big( \frac{\sin\theta'}{\sin\theta} \Big)^2} \ket{1} \\
    & = \frac{1}{\sqrt{2}} (\ket{0} + \ket{1}), 
\end{align*}
i.e., $P = H$. We then define $W' = P \otimes W = H \otimes W$ and use a 2-qubit ancilla register to obtain
\[
    W' \ket{00} \ket{\psi} = \sin \theta' \ket{00} V\ket{\psi} + \cos \theta' \ket{\phi^\perp}.
\]
The new oblivious amplitude amplification operator becomes
\[
    \mathcal{S}' = -W' R' (W')^* R',
\]
where $R' = 2\ket{00}\bra{00} - \mathds{1}$. The final state after one round of amplification is
\[
    \mathcal{S'} W' \ket{00} \ket{\psi} = \ket{00} V \ket{\psi}.
\]
We have outlined the implementation of $\qht_N$ using the above $LCU$ procedure in Algorithm \ref{alg:qht-lcu}. 

\begin{algorithm}
    \caption{$\qht_N$ via LCU}
    \label{alg:qht-lcu}
    \begin{algorithmic}[1]
        \Require $n$-qubit state $\ket{\psi}$.
        \Ensure $\qht_N\ket{\psi}$

        \State Prepare the state $\ket{00}\ket{\psi}$ by appending a 2-qubit ancillia in the zero state.
        \State Apply the unitary $W'$ to the state $\ket{00}\ket{\psi}$.
        \State Apply $R'$, $W'^*$, $R'$ and $-W'$ in that order.
        \State Trace out the first two qubits.
        \State Apply $\qft_N$ to the remaining register.
    \end{algorithmic}
\end{algorithm}

\begin{theorem}
    Algorithm \ref{alg:qht-lcu} correctly implements the quantum Hartley transform on $\qht_N$ using $\frac{1}{2} \log^2 N + O(\log N)$ elemntary gates.
\end{theorem}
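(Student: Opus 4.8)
The plan is to establish the theorem in two parts: first the correctness of Algorithm~\ref{alg:qht-lcu}, and then the gate count, where the dominant contribution comes from the single $\qft_N$ applied at the end. The correctness argument reduces to verifying the LCU identity $\qht_N = \qft_N V$ already assembled in the text and then certifying that the oblivious amplitude amplification step lands exactly on $\ket{00}V\ket{\psi}$.

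For the algebraic backbone, I would confirm $\qht_N = \qft_N V$ with $V = \frac{1}{\sqrt 2}(U_0 + U_1)$ by combining the splitting $\cht_N = \frac{1-i}{2}\cft_N + \frac{1+i}{2}\cft_N^*$ from \eqref{eq:ht-ft} with the identity $\cft_N T = \cft_N^*$ for the two's complement $T$; substituting the latter into the former and factoring $\cft_N$ out on the left yields exactly the displayed $V$, so this is a short verification. The substantive step is the amplitude amplification. First I would check the single-ancilla decomposition $W\ket 0\ket\psi = \sin\theta\,\ket 0 V\ket\psi + \cos\theta\,\ket{\phi^\perp}$ with $\theta = \pi/4$, obtained by expanding $W = (H\otimes\mathds 1_N)U(H\otimes\mathds 1_N)$ and reading off the $\ket 0$ component, where the success amplitude is $\sin(\pi/4)$ because $a_1 = a_2 = 1/\sqrt 2$. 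Since $\pi/(2\theta) = 2$ is even, Lemma~\ref{lem:obl-amp} does not apply directly. The key maneuver is the angle reduction: appending a second ancilla prepared by $P = H$ rescales the success amplitude by $\sin\theta'/\sin\theta = 1/\sqrt 2$, giving effective angle $\theta' = \pi/6$ for $W' = H\otimes W$, and now $\pi/(2\theta') = 3$ is odd. One round of $\mathcal S' = -W'R'(W')^*R'$ with $R' = 2\ket{00}\bra{00} - \mathds 1$ then lands exactly on $\ket{00}V\ket\psi$ by Lemma~\ref{lem:obl-amp}. Composing steps 2--3 of the algorithm realizes $\mathcal S'W'$, so after tracing out the two ancillas we hold $V\ket\psi$, and applying $\qft_N$ in step 5 produces $\qft_N V\ket\psi = \qht_N\ket\psi$.

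For the gate count, I would observe that every operation other than the final $\qft_N$ costs only $O(\log N)$ gates: $U_0$ is a global phase, $U_1 = e^{i\pi/4}T$ is the conditional two's complement $P_{2C}$ implemented in $4n - 4 = O(\log N)$ gates by the optimized construction of the previous subsection, and the surrounding Hadamards together with the reflections $R'$ are $O(1)$. Since $W'$ and $(W')^*$ are each applied a constant number of times, the entire amplification phase is $O(\log N)$. The final $\qft_N$ contributes the standard $\binom n2$ controlled phase rotations and $n$ Hadamards, i.e. $\frac12 n^2 + O(n) = \frac12\log^2 N + O(\log N)$ gates. Summing gives the claimed $\frac12\log^2 N + O(\log N)$, a fourfold improvement over the $2\log^2 N$ of the recursive algorithm.

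The main obstacle I anticipate is the angle-reduction argument itself: one must verify that appending the $P = H$ ancilla genuinely produces the decomposition $W'\ket{00}\ket\psi = \sin\theta'\,\ket{00}V\ket\psi + \cos\theta'\,\ket{\phi^\perp}$ with $\theta' = \pi/6$, rather than merely rescaling amplitudes heuristically, and that the orthogonal complement $\ket{\phi^\perp}$ generated by $W'$ sits in the subspace required by Lemma~\ref{lem:obl-amp} so that the lemma delivers an \emph{exact} transform. Confirming this tensor-structure bookkeeping --- in particular that $R'$ must reflect about $\ket{00}$ on both ancilla qubits jointly --- is where the care is needed; the remaining algebra and the gate count are routine.
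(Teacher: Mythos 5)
Your proposal is correct and takes essentially the same route as the paper: the paper's own proof simply defers correctness to the preceding LCU discussion (the $\theta'=\pi/6$ angle reduction via $P=H$ and one round of $\mathcal{S}' = -W'R'(W')^*R'$ landing exactly on $\ket{00}V\ket{\psi}$), which you reconstruct in full, and its complexity argument is likewise that $W'$, $(W')^*$, and $R'$ cost only $O(\log N)$ gates while the final $\qft_N$ dominates with $\frac{1}{2}\log^2 N + O(\log N)$. Your added care about the tensor-structure bookkeeping in Lemma~\ref{lem:obl-amp} and the joint reflection $R'$ about $\ket{00}$ is exactly the content the paper leaves implicit in ``the preceding discussion.''
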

\begin{proof}
    The correctness of the algorithm follows from the preceding discussion. Let us now analyze its gate complexity. The unitaries $U_0$ and $U_1$ are implemented using elementary phase gates and a two 's-complement gate. Consequently, the unitary $W$, and hence $W'$, can be implemented using $O(\log N)$ elementary gates. The reflection $R'$ can be implemented using $O(1)$ elementary gates. Therefore, the unitary $V$ in \eqref{eq:lcu-V} can be implemented using $O(\log N)$ elementary gates.

    Since the gate complexity of $\qft_N$ is $\tfrac{1}{2} \log^2 N + O(\log N)$, the overall gate complexity of $\qht_N$ is also $\tfrac{1}{2} \log^2 N + O(\log N)$.
\end{proof}

\noindent Figure \ref{fig:unitary-w} is a snippet from our implementation of $\qht_N$, where we implement the unitary $W$.
\begin{figure}
\begin{minted}{python}
def _build_unitary_w(data: list[int]):
    n = len(data)

    qc = QuantumCircuit(2 * n - 1, name="w")
    control_qubit = 0
    data_qubits = list(range(1, n + 1))
    anc_qubits = list(range(n + 1, 2 * n - 1))

    qc.h(control_qubit)
    ctrl_twos_complement(
              qc, 
              anc_qubits[0 : n - 2], 
              data_qubits + [control_qubit])
    qc.rz(np.pi / 2, control_qubit)
    qc.h(control_qubit)

    return qc.to_gate(label="W")
\end{minted}
    \caption{Unitary $W$ for the LCU subroutine}
    \label{fig:unitary-w}
\end{figure}

\noindent The gate complexity for various $\qht_N$ algorithms are compared in Table \ref{tbl:hartley-cmp}.
\begin{table}
    \centering
    \footnotesize
    \renewcommand{\arraystretch}{1.4}
    \renewcommand\tabularxcolumn[1]{m{#1}}
    \begin{tabularx}{\columnwidth}{|X|l|}
        \hline
        $\bm{\qht_N}$ \textbf{algorithm} & \textbf{Gate complexity} \\
        \hline
        Klappenecker and R\"{o}tteler \cite{klappenecker2001irresistible} \newline (Using controlled-$\qft_N$) & $\frac{5}{2}\log^2N + O(\log N)$ \\
        \hline
        Agaian and Klappenecker \cite{agaian2002quantum} \newline (Recursive decomposition) & $\frac{5}{2}\log^2N + O(\log N)$ \\
        \hline
        Doliskani, Mirzaei, and Mousavi \cite{doliskani2025public} \newline (Recusive decomposition) & $2\log^2N + O(\log N)$ \\
        \hline
        This work (using LCU) & $\frac{1}{2}\log^2N + O(\log N)$ \\
        \hline
    \end{tabularx}
    \vspace*{3mm}
    \caption{Gate complexity of different $\qht_N$ algorithms}
    \label{tbl:hartley-cmp}
\end{table}

\section{Type-I Quantum Cosine and Sine Transforms}
\label{sec:simult-qcst-I}

In this section, we describe the implementations for the Type-I Quantum Cosine and Sine Transforms. The first implementation follows a method that enables the simultaneous computation of both $\qct_N^{\mathrm{I}}$ and $\qst_N^{\mathrm{I}}$ using a single circuit \cite{klappenecker2001discrete}. The second implementation is a customized optimization that simplifies the structure and targets only the $\qst_N^{\mathrm{I}}$ output \cite{doliskani2025public}. Both implementations have been constructed and tested.

\subsection{Simultaneous since-cosine transform}

The circuit for Type-I Quantum Cosine and Sine Transforms, presented in \cite{klappenecker2001discrete}, is constructed using the following key matrix identity:
\[
    T_N^* \cdot F_{2N} \cdot T_N = C_N^{\text{I}} \oplus i S_N^{\text{I}},
\]
where \(C_N^{\text{I}}\) and \(S_N^{\text{I}}\) denote the Type-I cosine and sine transforms, respectively. The direct sum operator \(\oplus\) indicates that the transform output is split into two orthogonal subspaces: the cosine transform is applied when the control qubit is in the \(\ket{0}\) state, and the sine transform is applied when the control qubit is in the \(\ket{1}\) state. In our notation, this is equivalent to
\begin{equation}
    \label{eq:qst-qct-I}
    T_N^* \cdot \qft_{2N} \cdot T_N = \ket{0}\bra{0} \otimes \qct_N^{\text{I}} + i \ket{1}\bra{1} \otimes \qst_N^{\text{I}}.
\end{equation}
Therefore, to apply $\qct_N^{\text{I}}$ to a given state $\ket{\psi}$, one should append a single ancilla to prepare the state $\ket{0}\ket{\psi}$ and apply the above unitary. Similarly, to apply $\qst_N^{\text{I}}$, one starts with $\ket{1}\ket{\psi}$ and then clear out the phase $i$ using an $S$-gate.

The unitary transform $T_N$ is defined by the following action:
\begin{align*}
    T_N \ket{00} &= \ket{00}, \\
    T_N \ket{0x} &= \frac{1}{\sqrt{2}} \ket{0x} + \frac{1}{\sqrt{2}} \ket{1x'}, \\
    T_N \ket{10} &= \ket{10}, \\
    T_N \ket{1x} &= \frac{i}{\sqrt{2}} \ket{0x} - \frac{i}{\sqrt{2}} \ket{1x'},
\end{align*}
where \(x \in \{1, \dots, N-1\}\), and \(x'\) denotes the two's complement of \(x\). The unitary \(T_N\) can be decomposed into product of two unitarys $T_N = P_{2C} \cdot D$, where \(D\) is a conditional rotation gate acting on the control qubit based on the data register, and $P_{2C}$ is conditioned on the control qubit. More preceisely, the gate \(D\) is defined by:
\begin{align*}
    D\ket{00} &= \ket{00}, \\
    D\ket{0x} &= \frac{1}{\sqrt{2}} \ket{0x} + \frac{1}{\sqrt{2}} \ket{1x}, \\
    D\ket{10} &= \ket{10}, \\
    D\ket{1x} &= \frac{i}{\sqrt{2}} \ket{0x} - \frac{i}{\sqrt{2}} \ket{1x},
\end{align*}
where \(x \in \{1, \dots, N-1\}\). Therefore, $D$ can be implemented by applying an \(S\) gate followed by a Hadamard gate on the control qubit conditioned on the data register being non-zero. Algorithm \ref{alg:QCT_QST_Type_I} outlines the steps for the implementation of the simultaneous quantum cosine and sine transforms using the unitary in \eqref{eq:qst-qct-I}.

\begin{algorithm}
    \caption{Type-I Quantum Cosine and Sine Transforms}
    \label{alg:QCT_QST_Type_I}
    \begin{algorithmic}[1]
        \Require $(n + 1)$-qubit state $\ket{c}\ket{\psi}$, where $c \in \{0,1\}$.
        \Ensure $\ket{0}\qct_N^{\text{I}}\ket{\psi}$ if $c = 0$, and $\ket{1}\qst_N^{\text{I}}\ket{\psi}$ if $c = 1$.
        
        \State Apply the controlled unitareis $S \otimes \mathds{1}_N$ and $H \otimes \mathds{1}_N$, conditioned on the last $n$-qubits not being in the all-zero state $\ket{0^n}$.
        \State Perform the conditional two's complement using the first qubit as control. \Comment{the unitary $P_{2C}$}
        \State Apply the $\qft_{2N}$
        \State Perform the conditional two's complement using the first qubit as control. \Comment{the unitary $P_{2C}$}
        \State Apply the controlled unitareis $H \otimes \mathds{1}_N$ and $S \otimes \mathds{1}_N$, conditioned on the last $n$-qubits not being in the all-zero state $\ket{0^n}$.
        \State Apply an $S^*$ gate to the first qubit.
    \end{algorithmic}
\end{algorithm}

\subsection{Implementation}

To implement the conditional rotation in the \(D\) gate, it is necessary to determine whether the input data qubits are all zero. While one could in principle use an \(n\)-qubit-controlled gate to activate the rotation only when the data register is non-zero, such a gate would be both hardware-intensive and inefficient on current quantum devices. Instead, we simulate this functionality using a composition of basic gates—namely \cnot{} and Toffoli gates—through the construction of a quantum \orop-gate tree.

The core component of this approach is a 2-qubit quantum \orop{} gate that combines the logical values of two qubits and stores the result in an ancilla qubit. The quantum circuit for this basic \orop{} operation is shown in Figure \ref{fig:or-gate}. It computes the logical \orop{} of $q_0$ and $q_1$, and stores the result in $q_2$.

\begin{figure}[h]
    \centering
    \includegraphics[width = 3cm]{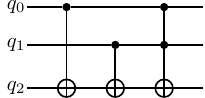}
    \caption{Circuit for \orop{}.}
    \label{fig:or-gate}
\end{figure}

This \orop{} gate works as follows: first, each data qubit applies a \cnot{} gate to the ancilla, flipping it if that qubit is in the \(\ket{1}\) state. Then, a Toffoli gate is applied using both data qubits as controls and the same ancilla as the target. The ancilla flips once when only one input is \(\ket{1}\), and flips three times when both inputs are \(\ket{1}\), ultimately producing the correct logical \orop{} behaviour. As a result, the ancilla is set to \(\ket{1}\) if and only if at least one of the two input qubits is \(\ket{1}\). Figure \ref{fig:or-gate-impl} shows the implementation of the \orop-gate.

\begin{figure}
\begin{minted}{python}
def or_gate(circuit: QuantumCircuit, 
            data1: int, 
            data2: int, 
            result: int):
    circuit.cx(data1, result)
    circuit.cx(data2, result)
    circuit.ccx(data1, data2, result)
\end{minted}
    \caption{Implementation of the \orop-gate.}
    \label{fig:or-gate-impl}
\end{figure}

By chaining these \orop{} gates in a binary tree structure, we can efficiently detect whether any bit in an \(n\)-qubit data register is nonzero. Each pair of data qubits is fed into a an \orop{} gate, whose result is stored in a new ancilla. These intermediate results are then recursively compared in higher layers of the tree until a single final ancilla qubit remains at the root. This root ancilla thus encodes whether the entire register is non-zero, and can be used to control the application of the \(S\) and \(H\) gates on the control qubit within the \(D\) gate construction. Figure \ref{fig:or-gate-tree} shows an \orop-gate tree using 4 input qubits and 3 ancilla qubits, where the final result is stored in $a_2$.

\begin{figure}[h]
    \includegraphics[width = \columnwidth]{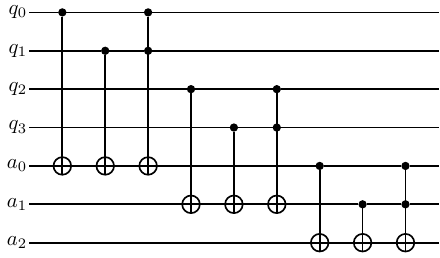}
    \caption{\orop-gate tree for $4$ qubits. The $q_i$ and $a_i$ are the data and ancilla qubits, respectively.}
    \label{fig:or-gate-tree}
\end{figure}

This \orop-tree method requires exactly \(n - 1\) ancilla qubits to perform a complete binary reduction over \(n\) data qubits. Each \orop{} operation consists of two \cnot{} gates and one Toffoli gate, for a total of \(3(n-1)\) gates to compute the result. If ancilla recovery is required--meaning all ancillas must be uncomputed and returned to the \(\ket{0}\) state--an additional \(3(n-1)\) gates are needed for the recovery process. In total, this results in \(n - 1\) ancilla qubits and \(6(n-1)\) gates to complete both the computation and ancilla uncomputation phases. If, in addition, the final ancilla qubit that stores the \orop-tree result must also be reset to \(\ket{0}\), a complete second \orop-tree evaluation is required. This adds another \(6(n-1)\) gates, bringing the total gate count to \(12(n-1)\). Figure \ref{fig:or-tree-impl} shows the main loop in the function constructing the \orop-tree.

\begin{figure}
\begin{minted}{python}
while len(current_layer) > 1:
    next_layer = []
    for i in range(0, len(current_layer), 2):
        if i + 1 < len(current_layer):
            q1 = current_layer[i]
            q2 = current_layer[i + 1]

            # Prefer a free scratch ancilla; 
            # otherwise use final_result if available
            if scratch_a:
                tgt = scratch_a.pop()
            elif final_result_idx is not None:
                tgt = final_result_idx
            else:
                raise ValueError(
                "Not enough ancilla available.")

            or_gate(circuit, q1, q2, tgt)
            operation_log.append((q1, q2, tgt))
            next_layer.append(tgt)
        else:
            # Odd leftover propagates unchanged
            next_layer.append(current_layer[i])

    current_layer = next_layer
\end{minted}
    \caption{The main loop for the implementation of the \orop-gate tree circuit.}
    \label{fig:or-tree-impl}
\end{figure}

\subsection{Optimized type-I sine transform}

If one only focuses on the sine transform, then there is a more optimized algorithm that avoids the large $n$-qubit-controled gates use in the Algorithm \ref{alg:QCT_QST_Type_I}. This technique was first presented by Doliskani, Mirzaei and Mousavi \cite{doliskani2025public} using the quantum Hartley transform. The gate complexity of the algorithm of \cite{doliskani2025public} is $2\log^2 N + O(\log N)$. In the following we propose an efficient algorithm for $\qst_N^{\text{I}}$ by adapting the same technique but using the quantum Fourier transform instead. That reduces the gate complextity of $\qst_N^{\text{I}}$ down to $\frac{1}{2}\log^2N + O(\log N)$.

Although this new algorithm for $\qst^{\text{I}}_N$ has the same asymptotic gate complexity as Algorithm \ref{alg:QCT_QST_Type_I}, an important optimization in the new algorithm lies in the removal of the large controlled gate structure used to detect whether the data register is non-zero in the algorithm of \cite{klappenecker2001discrete}. This detection was necessary to ensure the conditional application of the \(S\) and \(H\) gates on the control qubit. However, in Type-I Discrete Sine Transform (DST-I), the domain is restricted to indices \(1, 2, \dots, N-1\), so the data register never takes values \(0\) or \(N\) (which would evaluate to \(0 \mod N\)), both of which correspond to zero output amplitude in the sine basis. As a result, the controlled detection circuit can be eliminated entirely. Unfortunately, the same technique does not seem to adapt to the cosine transform in a straightforward way.

The algorithm proceeds as follows. Give a basis state $\ket{a}$, we prepare the state $\ket{0}\ket{a}$ by appending an acilla qubit in the zero state, which will be used as a contol qubit. We then apply an \(X\) gate followed by a Hadamard gate on the control qubit. This transforms the initial state into the superposition
\[
    \frac{1}{\sqrt{2}} \left( \ket{0}\ket{a} - \ket{1}\ket{a} \right).
\]
Applying $P_{2C}$ to the above state, using the first qubit as control, gives
\[
    \frac{1}{\sqrt{2}} \left( \ket{0}\ket{a} - \ket{1}\ket{N - a} \right).
\]
Denote the above operations as $A_N$, i.e., $A_N = P_{2C} (H \otimes \mathds{1}_N)$. Applying a $\qft_{2N}$ to the entire state results in the state
\begin{align*}
    \ket{\psi}
    & = \frac{1}{2\sqrt{N}} \sum_{y=0}^{2N-1} \Big( \omega_{2N}^{ay} - \omega_{2N}^{-ay} \Big) \ket{y} \\
    & = \frac{i}{\sqrt{N}} \sum_{y=1}^{2N-1} \sin\left( \frac{\pi a y}{N} \right) \ket{y}.
\end{align*}
This state can be rewritten, by separating the first and second halves of the sum, and a change of variables, as follows:
\begin{align*}
    \ket{\psi}
    & = \frac{i}{\sqrt{N}} \sum_{y = 1}^{N - 1} \sin\Big( \frac{\pi a y}{N} \Big) \ket{y} + \frac{i}{\sqrt{N}} \sum_{y = N}^{2N - 1} \sin\Big( \frac{\pi a y}{N} \Big) \ket{y} \\
    & = \frac{i}{\sqrt{N}} \sum_{y = 1}^{N - 1} \sin\Big( \frac{\pi a y}{N} \Big) \ket{y} + \\
    & \mathrel{\phantom{=}} \frac{i}{\sqrt{N}} \sum_{y = 1}^{N - 1} \sin\Big( \frac{\pi a (2N - y)}{N} \Big) \ket{2N - y} \\
    & = \frac{i}{\sqrt{N}} \sum_{y = 1}^{N - 1} \sin\Big( \frac{\pi a y}{N} \Big) (\ket{y} - \ket{2N - y})
\end{align*}
Separating the most significant qubit (the control qubit encoded in the most significant bit due to little-endian layout) we obtain
\[
    \ket{\psi} = \sqrt{\frac{2}{N}} \sum_{y=1}^{N-1} \sin\left( \frac{\pi a y}{N} \right) 
    \frac{i}{\sqrt{2}} \left( \ket{0}\ket{y} - \ket{1}\ket{N - y} \right).
\]
Now, we apply the inverse of \(A_N\) to eliminate the entanglement between the control and data registers and collapse the control qubit back to \(\ket{0}\). The resulting state is
\[
    i\ket{1} \sqrt{\frac{2}{N}} \sum_{y=1}^{N-1} \sin\left( \frac{\pi a y}{N} \right) \ket{y}.
\]
To summarize, we have constructed an efficient unitary \(A_N\) such that
\[
    \left( A_N^* \cdot \qft_{2N} \cdot A_N (X \otimes \mathds{1}_N)\right) \ket{0}\ket{a} = i\ket{1} \qst^{\text{I}}_{N-1} \ket{a},
\]
Finally, we apply an $S^*$ gate to eliminate the phase $i$, and an $X$ gate to obtain $\ket{0} \qst^{\text{I}}_{N-1} \ket{a}$. We have outlined the above steps in Algorithm \ref{alg:Optimized_QST}.

\begin{algorithm}[t]
    \caption{Optimized Quantum Sine Transform}
    \label{alg:Optimized_QST}
    \begin{algorithmic}[1]
        \Require $n$-qubit state $\ket{\psi}$.
        \Ensure $\qst^{\text{I}}_N\ket{\psi}$.
        \State Prepare the state $\ket{0}\ket{\psi}$ by appending a 1-qubit ancilla in the zero state. 
        \State Apply $X \otimes \mathds{1}_N$ and $H \otimes \mathds{1}_N$.
        \State Perform the conditional two's complement using the first qubit as control. \Comment{the unitary $P_{2C}$}
        \State Apply the $\qft_{2N}$.
        \State Perform the conditional two's complement using the first qubit as control. \Comment{the unitary $P_{2C}$}
        \State Apply $H \otimes \mathds{1}_N$, $S^* \otimes \mathds{1}_N$ and $X \otimes \mathds{1}_N$.
        \State Trace out the ancilla qubit
    \end{algorithmic}
\end{algorithm}

\begin{theorem}
    Algorithm \ref{alg:Optimized_QST} applies $\qst^{\mathrm{I}}_N$ using $\frac{1}{2}\log^2N + O(\log N)$ elementary gates.
\end{theorem}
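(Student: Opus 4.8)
The plan is to split the argument into two essentially independent parts: \emph{correctness}, which I would reduce almost entirely to the algebraic derivation already carried out above, and the \emph{gate count}, which is a short bookkeeping exercise dominated by the cost of $\qft_{2N}$.

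For correctness, I would first match the lines of Algorithm~\ref{alg:Optimized_QST} against the composite unitary $A_N^{*}\cdot\qft_{2N}\cdot A_N\,(X\otimes\mathds{1}_N)$ together with the final phase corrections. Recall $A_N = P_{2C}\,(H\otimes\mathds{1}_N)$. Since both $H$ and $P_{2C}$ are involutions (the map $x\mapsto(N-x)\bmod N$ is its own inverse, and its permutation matrix is real symmetric, hence Hermitian), we have $A_N^{*}=(H\otimes\mathds{1}_N)\,P_{2C}$. Reading Steps~2--3 as operators gives exactly $A_N(X\otimes\mathds{1}_N)$; Step~4 is $\qft_{2N}$; Step~5 followed by the leading $H\otimes\mathds{1}_N$ of Step~6 gives $A_N^{*}$; and the trailing $S^{*}\otimes\mathds{1}_N$ and $X\otimes\mathds{1}_N$ of Step~6 remove the global phase $i$ and reset the ancilla from $\ket{1}$ to $\ket{0}$. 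Thus the algorithm realizes precisely the operator analyzed above.

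It then remains to invoke the computation already displayed: applying $\qft_{2N}$ to $\tfrac{1}{\sqrt2}(\ket{0}\ket{a}-\ket{1}\ket{N-a})$ yields $\tfrac{i}{\sqrt N}\sum_{y=1}^{2N-1}\sin(\pi a y/N)\ket{y}$, and the change of variables $y\mapsto 2N-y$ on the upper half of the sum lets one factor the most significant (control) qubit out as $\tfrac{i}{\sqrt2}(\ket{0}\ket{y}-\ket{1}\ket{N-y})$. Applying $A_N^{*}$ disentangles and collapses the control register, leaving $i\ket{1}\,\qst^{\mathrm I}_{N}\ket{a}$, which the corrections in Step~6 turn into $\ket{0}\,\qst^{\mathrm I}_{N}\ket{a}$; by linearity this extends from $\ket{a}$ to arbitrary $\ket{\psi}$. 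For the gate complexity, the only super-linear cost is $\qft_{2N}$, on $n+1$ qubits. Using the bound $\tfrac12\log^2 M + O(\log M)$ for $\qft_M$ with $M=2N$, and $\log(2N)=\log N+1$, the expansion $\tfrac12(\log N+1)^2=\tfrac12\log^2N+\log N+\tfrac12$ shows this is $\tfrac12\log^2N+O(\log N)$. Everything else is cheap: the single-qubit gates $X$, $H$, $S^{*}$ contribute $O(1)$, and each of the two conditional two's-complement operations $P_{2C}$ is implemented via the optimized circuit of the previous subsection in $O(\log N)$ elementary gates (with $4n-4$ gates in the unconditional core). Summing gives the claimed $\tfrac12\log^2N+O(\log N)$.

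The main obstacle is not the complexity bookkeeping but pinning down the correctness statement precisely: the derivation naturally produces a transform whose size is governed by $N-1$ (DST-I lives on indices $1,\dots,N-1$), so I would be careful to reconcile the index convention of the theorem ($\qst^{\mathrm I}_N$) with the $N-1$ appearing in the derivation, and to state exactly which unitary on the $n$-qubit register is realized. Confirming that the vanishing sine amplitudes at the excluded indices ($a=0$ and $y\in\{0,N\}$) make the collapse of the control qubit \emph{exact} rather than merely approximate is the one place where an inattentive argument could slip.
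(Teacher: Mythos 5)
Your proposal is correct and takes essentially the same route as the paper: correctness is delegated to the preceding derivation of $A_N^{*}\cdot\qft_{2N}\cdot A_N\,(X\otimes\mathds{1}_N)$ with the final $S^{*}$ and $X$ corrections, and the gate count is dominated by $\qft_{2N}$ at $\tfrac{1}{2}\log^2 N + O(\log N)$, with every other unitary (the two $P_{2C}$ operations and the single-qubit gates) costing only $O(\log N)$. Your additional care about the Hermiticity of $P_{2C}$, the exact collapse of the control qubit, and the $N$ versus $N-1$ indexing convention goes beyond the paper's terse two-sentence proof but does not constitute a different approach.
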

\begin{proof}
    The $\qft_{2N}$ has gate complexity $\frac{1}{2}\log^2N + O(\log N)$. All other unitaries in the algorithm can be implemented using $O(\log N)$ elementary gates.
\end{proof}

\noindent The snippet in Figure \ref{fig:unitary-A-impl} shows the implementation of the unitary $A_N$ defined above.

\begin{figure}
\begin{minted}{python}
# control qubit for the transformation
ctrl = target_qubits[-1]  
circuit.x(ctrl)  # X on control

# -------------- A_N block ----------------
circuit.h(ctrl)  # H on control
ctrl_twos_complement(
    circuit, anc_qubits, target_qubits
)  # controlled two’s complement
\end{minted}
    \caption{Implementation of the unitary $A_N$.}
    \label{fig:unitary-A-impl}
\end{figure}

\section{Type-II Quantum Cosine and Sine Transforms}
\label{sec:qcst-II}

As with the Type-I transforms, Type-II transforms operate on quantum states of the form \(\ket{c} \otimes \ket{\psi}\), where \(\ket{c}\) is a single control qubit and \(\ket{\psi}\) is an \(n\)-qubit register. 

The algorithm for Type-II Quantum cosine and sine tansforms \cite{klappenecker2001discrete} is based on the identity $U_N \cdot F_{2N} \cdot V_N = C_N^{\text{II}} \oplus (-i) S_N^{\text{II}}$, where \(F_{2N}\) is the Fourier transform. In the quantum setting, the above identity is expressed as
\begin{equation}
    \label{eq:qst-qct-II}
    U_N \cdot \qft_{2N} \cdot V_N = \ket{0}\bra{0} \otimes \qct_N^{\text{II}} - \ket{1}\bra{1} \otimes \qst_N^{\text{II}}.
\end{equation}
Therefore, given an $n$-qubit input state $\ket{\psi}$, simillar to the Type-I transforms, we can start with $\ket{0}\ket{\psi}$ (resp. $\ket{1}\ket{\psi}$) and apply the unitary \eqref{eq:qst-qct-II} obtain the state $\ket{0}\qct_N^{\text{II}}\ket{\psi}$ (resp. $\ket{1}\qst_N^{\text{II}}\ket{\psi}$). In the following, we breifly discuss the decomposition of the unitareis $U_N$ and $V_N$ into elementary gates appropriate for implementation.

The unitary \(V_N\), applied before the Fourier transform, prepares the control and data registers into the proper entangled form. It is decomposed as
\[
    V_N = P_{1C} (H \otimes \mathds{1}_N),
\]
where \(H\) is a Hadamard gate acting on the control qubit, and the conditional one's complement $P_{1C}$ uses the first qubit as control. The unitary \(U_N\), applied after the Fourier transform, disentangles the registers and aligns them with the output basis of the Type-II transform. It is decomposed as
\[
    U_N = \dec_n \cdot G \cdot P_{2C} \cdot D_1,
\]
where, both $P_{2C}$ and $\dec_n$ use the first qubit as control. The unitary \(G\) is a controlled entangling unitary acting across the control and data registers. The diagonal unitary \(D_1\) acts on the full register and plays a key role in producing the desired eigenstructure. It decomposes as $D_1 = (C \otimes \mathds{1}_N) \cdot (\Delta_1 \oplus \Delta_2)$, where the diagonal matrices $\Delta_1$ and $\Delta_2$ are defined as
\begin{align*}
    \Delta_1 & = \text{diag}(1, \omega_{4N}, \dots, \omega_{4N}^{N-1}), \\
    \Delta_2 & = \text{diag}(\omega_{4N}^{-N+1}, \dots, \omega_{4N}^{-1}, 1), \\
    C & = \text{diag}(1, \omega_{4N}^{-1}),
\end{align*}
These operators can be written as tensor products of elementary gates:
\begin{align*}
    \Delta_1 & = L_n \otimes \cdots \otimes L_1, \\
    \Delta_2 & = K_n \otimes \cdots \otimes K_1,
\end{align*}
where \(L_j = \text{diag}(1, \omega_{4N}^{2^{j-1}})\) and \(K_j = \text{diag}(\omega_{4N}^{-2^{j-1}}, 1)\). Finally, the unitary $G$ is defined by the following action:
\begin{align*}
    G \ket{00} &= \ket{00}, \\
    G \ket{0x} &= \frac{1}{\sqrt{2}} \ket{0x} + \frac{i}{\sqrt{2}} \ket{1x}, \\
    G \ket{10} &= -i \ket{10}, \\
    G \ket{1x} &= \frac{1}{\sqrt{2}} \ket{0x} - \frac{i}{\sqrt{2}} \ket{1x}.
\end{align*}
Algorithm \ref{alg:QCT_QST_Type_II} outlines the steps for the implementation of quantum cosine and sine transforms using the unitary in \eqref{eq:qst-qct-II}.

\begin{algorithm}[t]
    \caption{Type-II Quantum Cosine and Sine Transform}
    \label{alg:QCT_QST_Type_II}
    \begin{algorithmic}[1]
        \Require $(n + 1)$-qubit state $\ket{c}\ket{\psi}$, where $c \in \{0,1\}$.
        \Ensure $\ket{0}\qct_N^{\text{II}}\ket{\psi}$ if $c = 0$, and $\ket{1}\qst_N^{\text{II}}\ket{\psi}$ if $c = 1$.

        \State Apply $H\otimes \mathds{1}_N$
        \State Perform the conditional one's complement using the first qubit as control. \Comment{the unitary $P_{1C}$}
        \State Apply $\qft_{2N}$.
        \State Apply unitary $D_1$.
        \State Perform the conditional two's complement using the first qubit as control. \Comment{the unitary $P_{2C}$}
        \State Apply $H \otimes \mathds{1}_N$ and $S \otimes \mathds{1}_N$
        \State Apply the controlled unitaries $S^* \otimes \mathds{1}_N$, $H \otimes \mathds{1}_N$ and $S^* \otimes \mathds{1}_N$, conditioned on the last $n$-qubits being in the all-zero state $\ket{0^n}$.
        \State Using the first qubit as control, apply a controlled decrement-by-1. \Comment{the unitary $\dec_n$}
        \State Apply an $Z$ gate to the first qubit
    \end{algorithmic}
\end{algorithm}

\subsection{Implementation}

The quantum circuit implementation of \(D_1\) consists of applying the gates \(K_j\) and \(L_j\) to each data qubit, conditioned on the state of the control qubit. These gates are followed by the application of the single-qubit diagonal operator \(C\) to the control qubit, completing the implementation of the diagonal unitary shown in Figure \ref{fig:D_1-circuit}.
\begin{figure}
    \includegraphics[width = \columnwidth]{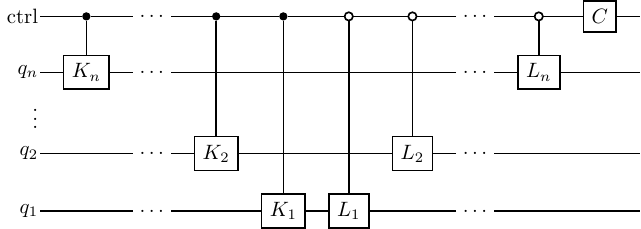}
    \caption{Circuit for the unitary $D_1$.}
    \label{fig:D_1-circuit}
\end{figure}
For the entangling operator \(G\), we leverage the same \orop-gate tree previously introduced in the Type-I transforms to determine whether the data register contains a nonzero value. The result of this check is stored in an ancilla control qubit. We then apply a Hadamard gate followed by an \(S\) gate to the main control qubit. If the \orop-tree output evaluates to zero, we further apply the sequence of gates \(S^*\), \(H\), and another \(S^*\) to reverse the initial rotation. 

Notably, since the \orop-gate tree is used exactly once in both the computation and uncomputation of the ancilla qubits, it does not need to be duplicated. In our construction, the final ancilla qubit produced by the \orop-tree is directly used to store the result, so no further \orop-tree evaluation is required to reset it. Therefore, a total of \(6(n - 1)\) gates is sufficient to perform both the computation and ancilla recovery, where \(n\) is the number of data qubits. Figure \ref{fig:unitary-G-impl} shows the implementation of the unitary $G$ as a gate.

\begin{figure}
\begin{minted}{python}
def G_gate(circuit: QuantumCircuit, 
           target_qubits: list[int], 
           anc_qubits: list[int]):
    # control qubit
    ctrl = target_qubits[-1]
    circuit.h(ctrl)
    circuit.s(ctrl)  

    # apply when x is zero
    circuit.x(anc_qubits[0])  
    # apply control S-dagger gate
    circuit.csdg(anc_qubits[0], ctrl)  
    circuit.ch(anc_qubits[0], ctrl)  
    circuit.csdg(anc_qubits[0], ctrl)
    circuit.x(anc_qubits[0])
\end{minted}
    \caption{Implementation of the unitary $G$. The ancilla qubit contains the result of the \orop-tree evaluation.}
    \label{fig:unitary-G-impl}
\end{figure}

\section{Type-IV Quantum Cosine and Sine Transforms}

Similar to other cosine and sine transforms, Type-IV transforms operate on quantum states of the form \(\ket{c}\ket{\psi}\), where \(\ket{c}\) is a single control qubit and \(\ket{\psi}\) is an \(n\)-qubit register. The algorithm for Type-IV transforms \cite{klappenecker2001discrete} is based on the identity
\[
    M \cdot U_N^T \cdot F_{2N} \cdot U_N = C_N^{\text{IV}} \oplus (-i) S_N^{\text{IV}},
\]
where \(F_{2N}\) is the Fourier transform. In the quantum setting, the above identity is expressed as
\begin{equation}
    \label{eq:qst-qct-IV}
    M \cdot U_N^T \cdot \qft_{2N} \cdot U_N = \ket{0}\bra{0} \otimes \qct_N^{\text{IV}} - i\ket{1}\bra{1} \otimes \qst_N^{\text{IV}},
\end{equation}
Simillar to the other transforms, setting the ancilla qubit $\ket{c}$ in the input $\ket{c}\ket{\psi}$ determines whether the $\qct_N^{\text{IV}}$ ($c = 0$) or $\qst_N^{\text{IV}}$ ($c = 1$) is applied to the $n$-qubit state $\ket{\psi}$. In the following, we briefly explain the unitaries $M$ and $U_N$.  

The diagonal unitary \(M = \mathrm{diag}(\omega_{4N}, \omega_{4N}) \otimes \mathds{1}_N\) is a global phase gate that acts only on the control qubit, ensuring that the overall output phase matches the canonical form of the Type-IV basis. The unitary \(U_N\) is structured to entangle the control and data registers in a carefully aligned phase basis. It begins with two single-qubit gates applied to the control: a Hadamard gate followed by an \(S^*\) gate. This creates a simple superposition with an embedded global phase. We can decompose $U_N$ as 
\[ U_N = P_{1C} D_2 ((HS^*) \otimes \mathds{1}_N), \]
where $P_{1C}$ uses the first qubit as control.

The unitary $D_2$ is a diagonal unitary that jointly acts on the control and data registers and is responsible for introducing data-dependent phases conditional on the control qubit. It can be decomposed as $D_2 = (C\otimes \mathds{1}_N) (\Delta_1 \oplus \Delta_1^*)$, where $\Delta_1 = \text{diag}(1, \omega_{4N}, \dots, \omega_{4N}^{N-1})$ and $C = \text{diag}(1, \omega_{4N}^{-1})$ are the unitaries also used in type-II transforms\footnote{In the original algorithm of \cite{klappenecker2001discrete}, the unitary $D_2$ is written as $D_2 = (C\otimes \mathds{1}_N) (\Delta_1 \oplus \Delta_2)$. This is not correct, i.e., it does not lead to the identity in \eqref{eq:qst-qct-IV}. We show in Appendix \ref{sec:correct-IV} that $D_2 = (C\otimes \mathds{1}_N) (\Delta_1 \oplus \Delta_1^*)$.}. Algorithm \ref{alg:QCT_QST_Type_IV} outlines the steps for implementing quantum cosine and sine transforms using the unitary in \eqref{eq:qst-qct-IV}.

\begin{algorithm}
    \caption{Type-IV Quantum Cosine and Sine Transform}
    \label{alg:QCT_QST_Type_IV}
    \begin{algorithmic}[1]
        \Require $(n + 1)$-qubit state $\ket{c}\ket{\phi}$ with $c\in\{0,1\}$.
        \Ensure $\ket{0} \qct_{N}\ket{\phi}$ if $c=0$, and $\ket{1}\qst_{N}\ket{\phi}$ if $c = 1$
        \State Apply $S^*\otimes \mathds{1}_N$ and $H\otimes \mathds{1}_N$
        \State Apply the unitary $D_2$ gate. 
        \State Perform the conditional one's complement using the first qubit as control. \Comment{the unitary $P_{1C}$}
        \State Apply $\qft_{2N}$.
        \State Perform the conditional one's complement using the first qubit as control. \Comment{the unitary $P_{1C}$}
        \State Apply the unitary $D_2^*$.
        \State Apply $H\otimes \mathds{1}_N$ and $S^* \otimes \mathds{1}_N$
        \State Apply $M$, where $M = \mathrm{diag}(\omega_{4N}, \omega_{4N}) \otimes \mathds{1}_N$.
        \State Apply an $S$ gate to the first qubit.        
    \end{algorithmic}
\end{algorithm}

\subsection{Implementation}

The unitary $D_2$ is constructed using the gates $L_j$ and $L_j^*$, $j = 1, \dots, n$, defined in Section \ref{sec:qcst-II}. The circuit for $D_2$, as shown in Figure \ref{fig:D_2-circuit}, closely resembles that of the unitary \(D_1\) defined in the Type-II transform, but with the gates $K_j$ replaced by $L_j^*$ for $j = 1, \dots, n$.

\begin{figure}[H]
    \includegraphics[width = \columnwidth]{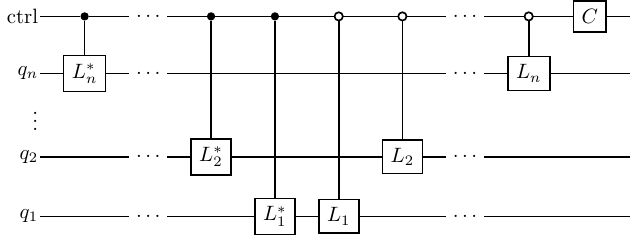}
    \caption{Circuit for the unitary $D_2$.}
    \label{fig:D_2-circuit}
\end{figure}

\noindent The snippet in Figure \ref{fig:Li-Ki-impl} shows the implementation of the gates $L_i$ and $K_i$.

\begin{figure}[H]
\begin{minted}{python}
def K_i(circuit: QuantumCircuit,
        control_qubit: int,
        target_qubit: int,
        i: int,
        theta: float):

    circuit.x(target_qubit)
    L_i(circuit, control_qubit, target_qubit, i, 
        -theta)
    circuit.x(target_qubit)

def L_i(circuit: QuantumCircuit,
        control_qubit: int,
        target_qubit: int,
        i: int,
        theta: float):

    circuit.cp((2 ** (i - 1)) * theta, 
               control_qubit, target_qubit)
\end{minted}
    \caption{Implementation of the digonal gates $K_i$ and $L_i$ used in the unitary $D_1$ and $D_2$.}
    \label{fig:Li-Ki-impl}
\end{figure}

\appendix

\section{Linear Combination of Unitaries (LCU)}
\label{sec:LCU}

In this section, we briefly review the LCU technique, following the expositions in \cite{childs2012hamiltonian, kothari2014efficient}. 
Let $U_1, \dots, U_M$ be a set of unitaries acting on a $N$-dimensional Hilbert space $\X_1$, corresponding to an $n$-qubit system. 
For simplicity, assume $M = 2^m$ for some integer $m > 0$. 

Given an operator $V = \sum_{j=0}^{M-1} a_j U_j$, where $a_j \in \R$, the idea of LCU is to design a simple quantum circuit that can implement the action of $V$. 
If $V$ is not unitary, any implementation of $V$ will necessarily be probabilistic, i.e., for any state $\ket{\psi} \in \X_1$, the state $V\ket{\psi}$ is obtained only with a certain probability. 
If $V$ is unitary, however, one can achieve a deterministic (i.e., probability-1) implementation of $V$ using a procedure called \emph{oblivious amplitude amplification}. 
Throughout this section, we assume $V$ is unitary.

Let \( a := \sum_j a_j \). Define the unitary \( A \) on an $M$-dimensional Hilbert space $\X_2$ by
\begin{equation}
    \label{eq:luc-A}
    A \ket{0^m} = \frac{1}{\sqrt{a}} \sum_{j=0}^{M-1} \sqrt{a_j}\, \ket{j}.
\end{equation}
Let $U := \sum_{j=0}^{M-1} \ket{j}\bra{j} \otimes U_j$ be the block-diagonal unitary encoding the $U_j$. Define 
\[
    W := (A^* \otimes \mathds{1}) U (A \otimes \mathds{1})
    \quad \text{and} \quad 
    \Pi := \ket{0^m}\bra{0^m} \otimes \mathds{1},
\]
both acting on the space $\X_2 \otimes \X_1$. The following lemma gives a probabilistic implementation of $V$.

\begin{lemma}[Lemma 2.1 of \cite{kothari2014efficient}]
    \label{lem:lcu-enc}
    For all $n$-qubit states \( \ket{\psi} \in \X_1 \), we have
    \[
        W \ket{0^m} \ket{\psi} = \frac{1}{a} \ket{0^m} V \ket{\psi} + \ket{\Phi^\perp},
    \]
    where the state \( \ket{\Phi^\perp} \in \X_2 \otimes \X_1 \) depends on $\ket{\psi}$ and satisfies $\Pi \ket{\Phi^\perp} = 0$.
\end{lemma}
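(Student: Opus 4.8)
The plan is to prove the identity by a direct computation of $W\ket{0^m}\ket{\psi}$, applying the three unitaries $A\otimes\mathds{1}$, $U$, and $A^*\otimes\mathds{1}$ in sequence and then isolating the component lying in the range of $\Pi$. Since $\ket{\Phi^\perp}$ is not specified independently but defined implicitly as ``everything outside the $\ket{0^m}$ block,'' the only real content is to show that the coefficient of $\ket{0^m}$ in the ancilla register equals exactly $\tfrac{1}{a}V\ket{\psi}$; the orthogonality condition $\Pi\ket{\Phi^\perp}=0$ will then hold by construction.

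First I would apply $A\otimes\mathds{1}$ to the input, using the defining equation \eqref{eq:luc-A}, to obtain
\[
(A\otimes\mathds{1})\ket{0^m}\ket{\psi} = \frac{1}{\sqrt{a}}\sum_{j=0}^{M-1}\sqrt{a_j}\,\ket{j}\ket{\psi}.
\]
Next, the block-diagonal $U=\sum_j\ket{j}\bra{j}\otimes U_j$ acts as $U_j$ on the data register controlled on the label $\ket{j}$, producing $\tfrac{1}{\sqrt{a}}\sum_j\sqrt{a_j}\,\ket{j}\,U_j\ket{\psi}$. Finally I would apply $A^*\otimes\mathds{1}$ and contract the first register against $\bra{0^m}$ to read off the coefficient of $\ket{0^m}$.

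The key observation is that $\bra{0^m}A^* = (A\ket{0^m})^{\dagger} = \tfrac{1}{\sqrt{a}}\sum_k\sqrt{a_k}\,\bra{k}$, again by \eqref{eq:luc-A}. Contracting this against the state from the previous step gives
\[
(\bra{0^m}\otimes\mathds{1})\,W\ket{0^m}\ket{\psi} = \frac{1}{a}\sum_{j,k}\sqrt{a_j a_k}\,\langle k|j\rangle\,U_j\ket{\psi} = \frac{1}{a}\sum_j a_j U_j\ket{\psi} = \frac{1}{a}V\ket{\psi},
\]
where the middle equality uses orthonormality $\langle k|j\rangle=\delta_{jk}$ and the last uses $V=\sum_j a_j U_j$. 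Setting $\ket{\Phi^\perp} := W\ket{0^m}\ket{\psi}-\tfrac{1}{a}\ket{0^m}V\ket{\psi}$ completes the argument, since subtracting the full $\ket{0^m}$-component forces $\Pi\ket{\Phi^\perp}=0$.

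I expect no genuine obstacle: the proof is essentially index bookkeeping. The one step requiring care is the Hermitian-conjugate computation, namely reading $\bra{0^m}A^*$ off as the conjugate transpose of $A\ket{0^m}$, together with the implicit requirement that each $a_j\ge 0$ so that $\sqrt{a_j}$ and the normalization in \eqref{eq:luc-A} are well defined. In the $\qht$ application this holds, with $a_0=a_1=1/\sqrt{2}$.
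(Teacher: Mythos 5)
Your proof is correct and is the standard argument: the paper itself states this lemma only by citation to \cite{kothari2014efficient} without reproducing a proof, and your direct computation — propagating $\ket{0^m}\ket{\psi}$ through $A\otimes\mathds{1}$, $U$, and $A^*\otimes\mathds{1}$, reading off the $\ket{0^m}$ component via $\bra{0^m}A^* = (A\ket{0^m})^\dagger$, and defining $\ket{\Phi^\perp}$ as the residual so that $\Pi\ket{\Phi^\perp}=0$ holds by construction — is precisely the proof given in that reference. Your closing remark that the $a_j$ must be nonnegative for the square roots in \eqref{eq:luc-A} to be well defined is a correct reading of the implicit hypotheses, and it is satisfied in the paper's application where $a_0 = a_1 = 1/\sqrt{2}$.
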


According to Lemma \ref{lem:lcu-enc}, to compute $V\ket{\psi}$, we first apply $W$ and then measure the first register. 
If the outcome is $\ket{0^m}$, then the resulting state is $V\ket{\psi}$. 
Since $V$ is unitary, the probability of success for this procedure is $1/a^2$. 
To boost this probability, we can perform a version of amplitude amplification, stated in the following lemma \cite[Lemma 2.2]{kothari2014efficient}.

\begin{lemma}[Oblivious amplitude amplification]
    \label{lem:obl-amp}
    Let $V$ be a unitary on an $n$-qubit space $\X_1$ and let $\theta \in (0, \pi / 2)$. 
    Let $W$ be a unitary on the $(m + n)$-qubit space $\X_2 \otimes \X_1$ such that for all $\ket{\psi} \in \X_1$,
    \[
        W \ket{0^m} \ket{\psi} = \sin(\theta) \ket{0^m} V \ket{\psi} + \cos(\theta) \ket{\Phi^\perp},
    \]
    where the $(m + n)$-qubit state \( \ket{\Phi^\perp} \in \X_2 \otimes \X_1\) depends on $\ket{\psi}$ and satisfies $\Pi \ket{\Phi^\perp} = 0$. 
    Let $R := 2\Pi - \mathds{1}$ and define $S := -WRW^*R$. Then for any $k \in \Z$,
    
    \begin{align*}
        S^k W \ket{0^m} \ket{\psi}
        &= \sin((2k + 1)\theta) \ket{0^m} V \ket{\psi} \\
        &\mathrel{\phantom{=}}+ \cos((2k + 1)\theta) \ket{\Phi^\perp}.
    \end{align*}
\end{lemma}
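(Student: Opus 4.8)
The plan is to reduce everything to a two-dimensional invariant subspace on which $S$ acts as a rotation by $2\theta$. Fix $\ket{\psi}\in\X_1$ and set $\ket{\psi_0}=\ket{0^m}\ket{\psi}$, $\ket{a}=\ket{0^m}V\ket{\psi}$, and $\ket{b}=\ket{\Phi^\perp}$. Since $\Pi\ket{a}=\ket{a}$ and $\Pi\ket{b}=0$, the vectors $\ket{a}$ and $\ket{b}$ are orthogonal; because $V$ and $W$ are unitary, expanding $1=\|W\ket{\psi_0}\|^2$ forces $\|\ket{a}\|=\|\ket{b}\|=1$, so $\{\ket{a},\ket{b}\}$ is an orthonormal basis of a plane $\mathcal{P}$. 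The hypothesis is precisely $\ket{g}:=W\ket{\psi_0}=\sin\theta\,\ket{a}+\cos\theta\,\ket{b}$, the claimed identity for $k=0$. It then suffices to show that $\mathcal{P}$ is $S$-invariant and that $S$ sends $\sin\alpha\,\ket{a}+\cos\alpha\,\ket{b}$ to $\sin(\alpha+2\theta)\,\ket{a}+\cos(\alpha+2\theta)\,\ket{b}$; the lemma follows by induction on $k\ge 0$ and extends to negative $k$ since $S$ is unitary, hence invertible.

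The key step, which I expect to be the main obstacle, is to determine how $W$ and $W^*$ act on $\mathcal{P}$. Writing $W$ in block form with respect to $\Pi$, its top-left block $B:=(\bra{0^m}\otimes\mathds{1})\,W\,(\ket{0^m}\otimes\mathds{1})$ is an operator on $\X_1$; applying $(\bra{0^m}\otimes\mathds{1})$ to the hypothesis and using $\Pi\ket{\Phi^\perp}=0$ gives $B\ket{\psi}=\sin\theta\,V\ket{\psi}$ for every $\ket{\psi}$, so $B=\sin\theta\,V$. Here it is essential that the hypothesis holds for all $\ket{\psi}$, so that $B$ is well defined. The matching block of $W^*$ is then $B^*=\sin\theta\,V^*$, and unitarity of $V$ now enters: evaluating $\Pi W^*\ket{a}$ through this block yields $\sin\theta\,V^*V\ket{\psi}$ in the data register, i.e.\ $\Pi W^*\ket{a}=\sin\theta\,\ket{\psi_0}$. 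Combining this with $W^*\ket{g}=\ket{\psi_0}$ and $\ket{g}=\sin\theta\,\ket{a}+\cos\theta\,\ket{b}$ forces $\Pi W^*\ket{b}=\cos\theta\,\ket{\psi_0}$.

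With these two identities I would introduce $Q:=W\Pi W^*$, the projector onto $W(\mathrm{Im}\,\Pi)$, so that $WRW^*=2Q-\mathds{1}$. Applying $W$ to the relations above gives $Q\ket{a}=\sin\theta\,\ket{g}$ and $Q\ket{b}=\cos\theta\,\ket{g}$, which simultaneously shows that $\mathcal{P}$ is $Q$-invariant and that $Q|_{\mathcal{P}}=\ket{g}\bra{g}$ is the rank-one projector onto $\ket{g}$. Hence, restricted to $\mathcal{P}$, the reflection $R=2\Pi-\mathds{1}$ fixes $\ket{a}$ (as $\Pi|_{\mathcal{P}}=\ket{a}\bra{a}$) while $WRW^*$ fixes $\ket{g}$, so both reflections preserve $\mathcal{P}$ and therefore so does $S$.

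Finally I would write both reflections as $2\times2$ matrices in the ordered basis $(\ket{a},\ket{b})$, where $\ket{g}$ has coordinates $(\sin\theta,\cos\theta)$. This gives $R|_{\mathcal{P}}=\bigl(\begin{smallmatrix}1&0\\0&-1\end{smallmatrix}\bigr)$ and $WRW^*|_{\mathcal{P}}=\bigl(\begin{smallmatrix}-\cos2\theta&\sin2\theta\\ \sin2\theta&\cos2\theta\end{smallmatrix}\bigr)$, whence $S=-\,WRW^*R=\bigl(\begin{smallmatrix}\cos2\theta&\sin2\theta\\ -\sin2\theta&\cos2\theta\end{smallmatrix}\bigr)$. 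This is exactly the rotation sending $\sin\alpha\,\ket{a}+\cos\alpha\,\ket{b}$ to $\sin(\alpha+2\theta)\,\ket{a}+\cos(\alpha+2\theta)\,\ket{b}$, establishing the inductive step and hence the stated formula for all $k\in\Z$. Everything after the block-extraction identity $B=\sin\theta\,V$ is routine $2\times2$ trigonometry, so the real content lies in that single computation together with $V^*V=\mathds{1}$.
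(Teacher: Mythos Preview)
Your argument is correct. The paper itself does not give a proof of this lemma; it is quoted as \cite[Lemma~2.2]{kothari2014efficient} and used as a black box in Appendix~\ref{sec:LCU}. Your two-dimensional invariant-subspace reduction is precisely the standard proof from that reference: the crucial point is that the top-left block of $W$ equals $\sin\theta\,V$, so unitarity of $V$ forces $\Pi W^*\ket{a}=\sin\theta\,\ket{\psi_0}$ and hence $\Pi W^*\ket{b}=\cos\theta\,\ket{\psi_0}$, after which $Q=W\Pi W^*$ restricts to $\ket{g}\bra{g}$ on $\mathcal{P}$ and the rest is the familiar Grover rotation computation. One cosmetic remark: the normalization $\|\ket{b}\|=1$ that you extract from $\|W\ket{\psi_0}\|^2=1$ silently uses $\cos\theta\neq 0$, which holds since $\theta\in(0,\pi/2)$; you also use this when dividing to obtain $\Pi W^*\ket{b}=\cos\theta\,\ket{\psi_0}$.
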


Combining Lemmas \ref{lem:lcu-enc} and \ref{lem:obl-amp} gives a procedure for implementing any $V$ that is a linear combination of unitaries. 
From the action of $W$, we first find $\theta$ such that $\sin(\theta) = 1/a$. 
If $(2k + 1)\theta = \pi/2$ for some integer $k$, then using this $k$ we obtain an exact implementation of $V$:
\[
    S^k W \ket{0^m} \ket{\psi} = \ket{0^m} V \ket{\psi},
\]
a unitary operation requiring an $m$-qubit ancilla. 

If $\pi / (2\theta)$ is not an odd integer, let $2k + 1$ be the smallest odd integer larger than $\pi / (2\theta)$. 
Then there exists $\theta' < \theta$ such that $(2k + 1)\theta' = \pi/2$. 
Define the rotation
\[
    P: \ket{0} \mapsto \Big( \frac{\sin\theta'}{\sin\theta} \Big) \ket{0} + \sqrt{1 - \Big( \frac{\sin\theta'}{\sin\theta} \Big)^2} \ket{1}.
\]
Now construct the new unitary $W' = P \otimes W$, which acts as
\[
    W'\ket{0^{m+1}} \ket{\psi} = \sin(\theta')\ket{0^{m+1}} V \ket{\psi} + \cos(\theta')\ket{\Phi^\perp},
\]
and replace the original $W, R, S$ in the amplitude amplification procedure with $W', S', R'$, where $R' = 2\Pi' - \mathds{1}$ and $\Pi' = \ket{0^{m + 1}}\bra{0^{m + 1}}$. Therefore, using the new $k$ and $\theta'$, and an extra ancilla qubit, we achieve an exact implementation of $V$:
\[
    (S')^k W' \ket{0^{m+1}} \ket{\psi} = \ket{0^{m+1}} V \ket{\psi}.
\]

\section{Correction on the Type-IV Transforms}
\label{sec:correct-IV}

In the original algorithm for type-IV transforms proposed in \cite{klappenecker2001discrete}, the unitary
\[ D_2 = (C\otimes \mathds{1}_N) (\Delta_1 \oplus \Delta_2) \]
is used, where
\[ \Delta_2 = \text{diag}\,(\omega_{4N}^{-N + 1}, \dots, \omega_{4N}^{-2}, \omega_{4N}^{-1}, 1). \]
In this section, we birefly show, by direct calculation, that the correct unitary for $D_2$ is given by $ D_2 = (C\otimes \mathds{1}_N) (\Delta_1 \oplus \Delta_1^*)$ instead.

Recall that the unitary $U_N$ admits the decomposition
\begin{align*}
    U_N
    & = P_{1C} D_2 ((HS^*) \otimes \mathds{1}_N) \\
    & = P_{1C} (C\otimes \mathds{1}_N) (\Delta_1\oplus\Delta_1^*) ((HS^*) \otimes \mathds{1}_N),
\end{align*}
In matrix notation, we have
\[
    (HS^*) \otimes \mathds{1}_N = \frac{1}{\sqrt{2}}
    \begin{bmatrix}
        \mathds{1}_N & -i\mathds{1}_N \\
        \mathds{1}_N & i\mathds{1}_N
    \end{bmatrix}.
\]
We also have
\begin{align*}
    \Delta_1\oplus\Delta_1^* & =
    \begin{bmatrix}
        \Delta_1 & 0 \\
        0 & \Delta_1^*
    \end{bmatrix}, \\[2mm]
    C\otimes \mathds{1}_N & =
    \begin{bmatrix}
        \mathds{1}_N & 0 \\
        0 & \omega_{4N}^{-1} \mathds{1}_N
    \end{bmatrix}, \\[2mm]
    P_{1C} & =
    \begin{bmatrix}
        \mathds{1}_N & 0 \\
        0 & X^{\otimes n}
    \end{bmatrix}.
\end{align*}
Multiplying the factors step by step gives
\begin{align*} 
    U_N
    & = P_{1C} (C\otimes \mathds{1}_N) (\Delta_1\oplus\Delta_1^*) ((HS^*) \otimes \mathds{1}_N) \\
    & = P_{1C} (C\otimes \mathds{1}_N) (\Delta_1\oplus\Delta_1^*)
    \begin{bmatrix}
        \mathds{1}_N & -i\mathds{1}_N \\
        \mathds{1}_N & i\mathds{1}_N
    \end{bmatrix} \\    
    & = P_{1C} (C\otimes \mathds{1}_N) 
    \begin{bmatrix}
        \Delta_1 & -i\Delta_1 \\
        \Delta_1^* & i\Delta_1^*
    \end{bmatrix} \\
    & = P_{1C} \frac{1}{\sqrt{2}} 
    \begin{bmatrix}
        \Delta_1 & -i\Delta_1 \\
        \omega_{4N}^{-1} \Delta_1^* & i\omega_{4N}^{-1} \Delta_1^*
    \end{bmatrix} \\
    & = \frac{1}{\sqrt{2}}
    \begin{bmatrix}
        \Delta_1 & -i\Delta_1 \\
        \omega_{4N}^{-1} X^{\otimes n} \Delta_1^* & i\omega_{4N}^{-1} X^{\otimes n} \Delta_1^*
    \end{bmatrix}.
\end{align*}
A direct comparison shows that the explicit form of the matrix $U_N$ is
\[
    \setlength{\arraycolsep}{0.5mm}
    U_N = \frac{1}{\sqrt{2}}
    \begin{bmatrix}
        1 &             &             & -i              &             &             &             & \\
        & \omega_{4N}      &             &                 & -i\omega_{4N}    &             &             & \\
        &             & \ddots      &                 &             & \ddots      &             & \\
        &             &             & \omega_{4N}^{N-1}    &             &             & -i\omega_{4N}^{N-1} & \\
        &             &             & \omega_{4N}^{-N} &      &             & 1           & \\
        &             & \iddots     &                 &             & \iddots     &             & \\
        & \omega_{4N}^{-2}   &                 &             & i\omega_{4N}^{-2} & & & \\
        \omega_{4N}^{-1} &           &                 & i\omega_{4N}^{-1}             &  &      & &
    \end{bmatrix},
\]
which agrees with identity \eqref{eq:qst-qct-IV}.

\newpage
\bibliographystyle{plain}
\bibliography{references}

\end{document}